\providecommand{\card}[1]{\mathsf{card}(#1)}
\providecommand{\norm}[1]{\lVert #1 \rVert}
\newcommand{\sink}{\mathsf{sink}}
\newcommand{\calA}{\mathcal{A}}
\newcommand{\AEC}{\mathsf{AEC}}
\newcommand{\calD}{\mathcal{D}}
\newcommand{\calM}{\mathcal{M}}
\newcommand{\calAP}{\mathcal{AP}}
\newcommand{\prox}{\mathbf{prox}}
\newcommand{\proj}{\mathbf{proj}}
\newcommand{\avg}{\mathbf{avg}}
\newcommand{\exch}{\mathbf{exch}}
\newcommand{\truev}{\mathsf{true}}
  \newtheorem{definition}{Definition}
 \newtheorem{problem}{Problem}
\newtheorem{lemma}{Lemma}
\newtheorem{remark}{Remark}
\acrodef{dba}[DBA]{deterministic B\"uchi automaton}
\acrodef{molp}[MOLP]{multi-objective linear programming}
\acrodef{admm}[ADMM]{alternating direction method of multipliers}
\acrodef{lp}[LP]{linear programming}
\acrodef{ltl}[LTL]{Linear temporal logic}
\acrodef{mdp}[MDP]{Markov decision process}
\acrodef{mdps}[MDPs]{Markov decision processes}
\acrodef{rl}[RL]{reinforcement learning}
\acrodef{dra}[DRA]{deterministic Rabin automaton}
\newcommand{\supp}{\mathsf{Supp}}
\newcommand{\acc}{\mathsf{Acc}} 
\newcommand{\Inf}{\mathsf{Inf}} 
\newcommand{\calC}{\mathcal{C}}
\newcommand{\act}{A}
\newcommand{\val}{\mathsf{Val}}
\newcommand{\Peri}{\mathsf{Periphery}}
\acrodef{aec}[AEC]{accepting end components}
\begin{document}

\title{Optimal control in Markov decision processes via distributed
  optimization \author{Jie Fu, Shuo Han and Ufuk Topcu} \thanks{J. Fu,
    S. Han and U. Topcu are with the Department of Electrical and
    Systems Engineering, University of Pennsylvania, Philadelphia, PA
    19104, USA {\tt\small jief, hanshuo, utopcu@seas.upenn.edu}.}}
\maketitle
\begin{abstract}
  Optimal control synthesis in stochastic systems with respect to
  quantitative temporal logic constraints can be formulated as linear
  programming problems. However, centralized synthesis algorithms do
  not scale to many practical systems. To tackle this issue, we
  propose a decomposition-based distributed synthesis algorithm. By
  decomposing a large-scale stochastic system modeled as a Markov
  decision process into a collection of interacting sub-systems, the
  original control problem is formulated as a linear programming
  problem with a sparse constraint matrix, which can be solved through
  distributed optimization methods.  Additionally, we propose a
  decomposition algorithm which automatically exploits, if exists, the
  modular structure in a given large-scale system. We illustrate the
  proposed methods through robotic motion planning examples.
\end{abstract}
\section{Introduction}
% What problem is to be examined and why it is interesting?  Markov
% decision processes are often used as finite-state abstractions of
% complex systems and can capture both non-deterministic and
% stochastic behavior in the system dynamics.
For many systems, temporal logic formulas are used to describe
desirable system properties such as safety, stability, and liveness
\cite{manna1992temporal}. Given a stochastic system modeled as a
\ac{mdp}, the synthesis problem is to find a policy that achieves
optimal performance under a given quantitative criterion regarding
given temporal logic formulas. For instance, the objective may be to
find a policy that maximizes the probability of satisfying a given
temporal logic formula. In such a problem, we need to keep track of
the evolution of state variables that capture system dynamics as well
as predicate variables that encode properties associated with the
temporal logic constraints
\cite{thiebaux2006decision,baier2008principles}.  As the number of
states grows exponentially in the number of variables, we often
encounter large \ac{mdp}s, for which the synthesis problems are
impractical to solve with centralized methods. % On the other hand, many
% complex systems have modular structures. For example, a system can
% be a network of inter-connected and loosely coupled subsystems.
The
insight for control synthesis of large-scale systems is to exploit the
modular structure in a system so that we can solve the original
problem by solving a set of small subproblems.

In literature, distributed control synthesis methods are proposed in
the pioneering work for \ac{mdp}s with discounted rewards
\cite{kushner1974decomposition,dean1995decomposition}. The authors
formulate a two-stage distributed reinforcement learning method: The
first stage constructs and solves an abstract problem derived from the
original one, and the second stage iteratively computes parameters for
local problems until the collection of local problems' solutions
converge to one that solves the original problem.  Recently, \ac{admm}
is combined with a sub-gradient method into planning for
average-reward problems in large MDPs in
\cite{Krishnamurthy2013}. However, the method in
\cite{Krishnamurthy2013} applies only when some special conditions are
satisfied on the costs and transition kernels. Alternatively,
hierarchical reinforcement learning introduces
\emph{action-aggregation} and \emph{action-hierarchies} to address the
planning problems with large MDPs \cite{barto2003recent}. In
action-aggregation, a micro-action is a local policy for a subset of
states and the global optimal policy maps histories of states into
micro-actions. However, it is not always clear how to define the
action hierarchies and how the choice of hierarchies affects the
optimality in the global policy. Additionally, the aforementioned
methods are in general difficult to implement and cannot handle
temporal logic specifications.

For synthesis problems in \ac{mdp}s with quantitative temporal logic
constraints, centralized methods and tools
\cite{baier2008principles,KNP11} are developed and applied to control
design of stochastic systems and robotic motion planning
\cite{ding2011mdp,wolff2012optimal,Lahijanian2012}. % Common toolsets,
% for example, PRISM \cite{KNP11}, solve the synthesis problem in
% \ac{mdp}s with quantitative temporal logic constraints by either
% value iteration or linear programming.
Since centralized algorithms are based on either value iteration or
linear programming, they inevitably hit the barrier of scalability and
are not viable for large \ac{mdp}s.

In this paper, we develop a distributed optimization method for large
\ac{mdp}s subject to temporal logic contraints.  We first introduce a
decomposition method for large \ac{mdp}s and prove a property in such
a decomposition that supports the application of the proposed
distributed optimization. For a subclass of \ac{mdp}s whose graph
structures are Planar graphs, we introduce an efficient decomposition
algorithm that exploits the modular structure for the underlying
\ac{mdp} caused by loose coupling between subsets of states and its
constituting components. Then, given a decomposition of the original
system, we employ a distributed optimization method called \emph{block
  splitting algorithm} \cite{Parikh2013} to solve the planning problem
with respect to discounted-reward objectives in large \ac{mdp}s and
average-reward objectives in large ergodic \ac{mdp}s.  Comparing to
two-stage methods in
\cite{dean1995decomposition,Krishnamurthy2013,Daoui2010}, our method
concurrently solves the set of sub-problems and penalizes solutions'
mismatches in one step during each iteration, and is easy to
implement.  Since the distributed control synthesis is independent
from the way how a large \ac{mdp} is decomposed, any decomposition
method can be used. Lastly, we extend the method to solve the
synthesis problems for \ac{mdp}s with two classes of quantative
temporal logic objectives. Through case studies we investigate the
performance and effectiveness of the proposed
method. % is illustrated in several robotic
% motion planning examples.

% Our main contribution is
% three-fold: 
% \begin{inparaenum}[1)]\item We formulate the planning problem into a
%   linear programming problem with sparse constrained matrix based on
%   the decomposition of a large \ac{mdp}.  \item We show the block
%   splitting algorithm can be employed to solve the linear programming
%   problem in a distributed and parallel manner. \item We propose a
%   decomposition algorithm based on Planar graph theory to generate 
%   decompositions with propertie
% \end{inparaenum}
% Moreover, the resulting sub-problems can be solved independently and
% in parallel.

% The rest of the paper is organized as follows. In
% section~\ref{sec:prelim}, we discuss some preliminaries and give a
% formal problem statement. We present a decomposition method in
% section~\ref{sec:decomp}. Based on the decomposition,
% section~\ref{sec:planning} presents a distributed planning algorithm
% for discounted-reward and average-reward criteria. We extend the
% distributed planning algorithms for large \ac{mdp}s with quantitative
% temporal logic constraints in
% section~\ref{sec:ltl}. Section~\ref{sec:decompmethod} presents a
% decomposition method. Section~\ref{sec:examples} illustrates the
% effectiveness of the proposed algorithm with numerical
% experiments.  Section~\ref{sec:conclude} concludes.

% \ac{admm} is known to converge in few iterations to the global optimum
% with low precision. However, it is often the case that ADMM converges
% to a modest accuracy—sufficient for many applications—within a few
% tens of iterations.

% The structure of the paper.

\section{Preliminaries}
\label{sec:prelim}
Let $\Sigma$ be a finite set. Let $\Sigma^\ast, \Sigma^\omega$ be the
set of finite and infinite words over $\Sigma$. $\card{\Sigma}$ is the
cardinality of the set $\Sigma$.  A probability distribution on a
finite set $S$ is a function $D : S \rightarrow [0,1]$ such that
$\sum_{s\in S} D(s)=1$. The support of $D$ is the set
$\supp(D)=\{s\in S\mid D(s) >0\}$. The set of probability
distributions on a finite set $S$ is denoted $\mathcal{D}(S)$.

% \paragraph{
\vspace{1ex}
\noindent  \textbf{Markov decision process:} A \emph{Markov
  decision process} (MDP) $ M= \langle S, A , u_0, P\rangle $ consists
of a finite set $S$ of states, a finite set $A$ of actions, an initial
distribution $u_0 \in \calD(S)$ of states, and a transition probability
function $P: S \times A \rightarrow \calD(S)$ that for a state
$s\in S$ and an action $a\in A$ gives the probability $P(s,a)(s')$ of
the next state $s'$.  Given an \ac{mdp} $M$ we define the set of
actions \emph{enabled} at state $s$ as
$\act(s)=\{a\in A\mid \exists s' \in S, P(s,a)(s') >0\}$.  The
cardinality of the set $\{(s, a)\mid s \in S, a\in A(s)\}$ is the
number of \emph{state-action pairs} in the \ac{mdp}.

% \paragraph{\textbf{Policies:}} 
\vspace{1ex}
\noindent \textbf{Policies:} A \emph{path} is an infinite sequence
$s_0s_1\ldots $ of states such that for all $i\ge 0$, there exists
$ a\in A$, $s_{i+1}\in \supp(P(s_i,a))$.  A \emph{policy} is a
function $f: S^\ast \rightarrow \calD(A)$ that, given a finite state
sequence representing the history, chooses a probability distribution
over the set $A$ of actions. Policy $f$ is memoryless if it only
depends on the current state, i.e., $f:S\rightarrow \calD(A)$. Once a
policy $f$ is chosen, an \ac{mdp} $M$ is reduced to a Markov chain,
denoted $M^f$. We denote by $X_i$ and $\theta_i$ the random variables
for the $i$-th state and the $i$-th action in this chain $M^f$. Given
a policy $f$, for a measurable function $\phi$ that maps paths into
reals, let $E_{u_0}^f[\phi]$ (resp. $E_s^f[\phi]$) be the expected
value of $\phi$ when the policy $f$ is used given $u_0$ being the
initial distribution of states (resp. $s$ being the initial state).

\vspace{1ex}
\noindent \textbf{Rewards and objectives:} Given an \ac{mdp} $M$, a
reward function $R: S\times A\rightarrow \mathbb{R}$ and a policy $f$, let $\gamma $ be a discounting factor, the
\emph{discounted-reward value} is defined as $ \val_\gamma^f(u_0)=
\val_\gamma^f(s)\cdot u_0(s)$ where
$\val_\gamma^f(s) = E^f_{s}(\sum_{n=0}^\infty \gamma^n R(X_n,
\theta_n))$;
% \end{multline*}$
the \emph{average-reward value} is defined as $\val^f(u_0)=
\val^f(s)\cdot u_0(s)$ where $\val^f (s) = \lim_{n\rightarrow \infty}
\frac{1}{n}E_s^f\left[ \sum_{k=0}^n R(X_k,
  A_k)\right]$. A discounted-reward (resp. an average-reward) problem
is, for a given initial state distribution, to obtain a policy that
maximizes the discounted-reward value (resp.  average-reward value).
For discounted-reward (average-reward) problems, the optimal value can
be attained by memoryless policies \cite{Bertsekas2007}.
% and
% can be formulated as a linear programming problem.

% \vspace{1ex}
% \noindent \textbf{Control synthesis using linear programming:}
A solution to the
discounted-reward problem can be found by solving the \ac{lp} problem:
%vector form
\begin{subequations}\label{eq:constraintdiscounted}
\begin{align}
&\max_{x \in \mathbb{R}^{m}_{+}} \sum_{s\in S}\sum_{a\in \act(v)} x(s,a)\cdot
R(s,a) \label{eq:constraintdiscounted-obj}\\
&\text{subject to } \\
&  \sum_{a\in \act(s)}  x(s,a) - \gamma \cdot \sum_{s'\in
  S}\sum_{a'\in\act(s')} x(s',a') \cdot P(s',a')(s) \nonumber \\
&= u_0(s), \forall s
\in S, \label{eq:constraintdiscounted-1}
% & x(s,a)\ge 0, \forall s\in S, a\in \act(s). \label{eq:constraintdiscounted-2}
\end{align}
\end{subequations}
where $m$
is the total number of state-action pairs in the \ac{mdp},
$\mathbb{R}^m_+$
is the non-negative orthant of $\mathbb{R}^{m}$,
and variable $x(s,a)$
can be interpreted as the expected discounted time of being in state
$s$
and taking action $a$.
Once the \ac{lp} problem in \eqref{eq:constraintdiscounted} is solved,
the optimal policy is obtained as $f(s,a)
= \frac{x(s,a)}{\sum_{a'\in \act(s)}
  x(s,a')}$ and the objective function's value is the optimal
discounted-reward value under policy $f$
given the initial distribution $u_0$ of states.

% In an \emph{ergodic} Markov chain,
% % In an \emph{average reward problem}, we consider a particular class of
% % \ac{mdp}s called \emph{irreducible \ac{mdp}}. Given a Markov chain, a
% % state $v$ is \emph{communicating} to state $s'$ if there exists an
% % integer $n$ such that the probability of reaching state $s'$ from $s$
% % in $n$ steps is strictly greater than $0$. A set of state is a
% % \emph{communicating class} if every pair of states are communicating
% % with each other. A Markov chain is \emph{irreducible} if its state
% % space is a single communicating class. An \ac{mdp} $M$ is
% % \emph{irreducible} if for every memoryless and deterministic policy,
% % the induced Markov chain is irreducible.
% % In an irreducible Markov chain, 
% the steady-state probabilities are independent of the initial state
% distribution. An \ac{mdp} is ergodic if for every policy, the induced
% Markov chain is ergodic.
In an ergodic \ac{mdp}, the average-reward value is a constant
regardless of the initial state distribution \cite{puterman2009markov}
(see \cite{puterman2009markov} for the definition of
ergodicity). % The \emph{value function}, i.e., the
% function we aim to maximize, under a given policy $f$ is
% \[
% \val^f (s) = \lim_{n\rightarrow \infty} \frac{1}{n}E^f\left[
%   \sum_{k=0}^n R(X_k, A_k)\mid X_0= s\right],  \quad s \in S
% \]
% The optimal value function is given by 
% \[
% \val^\ast(s)= \sup_{f\in \calF} \val^f(s).
% \]
% Before introducing the distributed optimization method, we show the
% general method which obtains the optimal policy that maximizes the
% average reward by solving an \ac{lp} problem in irreducible
% \ac{mdp}s \cite{puterman2009markov}.  Similar to the discounted
% case, we define decision variables $x(s,a)$ for $s\in S$ and
% $a\in A$. For the average reward, the meaning of $x(s,a)$ is the
% long run fraction of the time that the system is at state $s$ and
% action $a$ is chosen.
We obtain an optimal policy for an average-reward problem by solving the
\ac{lp} problem
\begin{subequations}
\label{eq:averageLP}
\begin{align}
&\max_{x\in \mathbb{R}^m_+}\sum_{s\in S}\sum_{a \in \act(s)} x(s,a)
  \cdot R(s,a) \label{eq:averageLP-obj}\\
&\text{subject to } \nonumber \\
& \sum_{a \in \act(s)} x(s,a) - \sum_{s'\in S} \sum_{a'\in \act(s')}
 x(s',a')  \cdot P(s',a')(s) = 0 ,\nonumber \\
&\forall s \in S,  \label{eq:averageLP-1}\\
& \sum_{s\in S}\sum_{a\in \act(s)} x(s,a)=1 ,\label{eq:averageLP-2}
% & x(s,a) \ge 0, \forall s\in S, a \in
% \act(s). \label{eq:averageLP-3}
\end{align}
\end{subequations}
where $x(s,a)$ is understood as the long-run fraction of time that the
system is at state $s$ and the action $a$ is taken.  Once the \ac{lp}
problem in \eqref{eq:averageLP} is solved, the optimal policy is obtained as
$f(s,a) = \frac{x(s,a)}{\sum_{a'\in \act(s)} x(s,a')}$.  The optimal
objective value is the optimal average-reward value and is the same for all
states.

\vspace{1ex}
\noindent  \textbf{Distributed optimization:} As a prelude to the distributed
synthesis method developed in section~\ref{sec:planning}, now we
describe the \emph{alternating direction method of multipliers}
(\ac{admm}) \cite{boyd2011distributed} for the generic convex
constrained minimization problem
% \begin{align*}
% &\min_{z} g(z)\\
% \text{subject to } & z \in \mathbf{C}
% \end{align*}
$ \min_{z \in \mathbf{C}} g(z)$ where function $g$ is closed proper
convex and set $\mathbf{C}$ is closed nonempty convex. In iteration
$k$ of the \ac{admm} algorithm the following updates are performed:
\begin{subequations}\label{eq:subeqns}
\begin{align}
 z^{k+1/2} & := \prox_{g}(z^k -\tilde z^k), \label{eq:first}\\
z^{k+1} &:= \Pi_{\bf C}(z^{k+1/2} +\tilde z^k), \label{eq:second}\\
\tilde z^{k+1} &:= \tilde z^k + z^{k+1/2} -z^{k+1} \label{eq:dualupdate},
\end{align}
\end{subequations}
where $z^{k+1/2}$ and $\tilde z^k$ are auxiliary variables,
$\Pi_{\bf C}$ is the (Euclidean) projection onto $\bf C$, and
$ \prox_{g}(v)= \arg\min_{x}\left( g(x)+(\rho/2)\norm{x-v}_2^2 \right)
$
is the \emph{proximal operator} \cite{boyd2011distributed} of $g$ with
parameter $\rho >0 $. The algorithm handles separately the objective
function $g$ in \eqref{eq:first} and the constraint set $\bf C$ in
\eqref{eq:second}. In \eqref{eq:dualupdate} the dual update step
coordinates these two steps and results in convergence to a solution
of the original problem. % Different choices
% of $\rho$ will affect the convergence rate but the algorithm is
% ensured to converge with any choice of $\rho$.

\vspace{1ex}
\noindent \textbf{Temporal logic:} \ac{ltl} formulas are defined by:
$ \phi:=p \mid \neg \phi \mid \phi_1\lor \phi_2 \mid \bigcirc \phi
\mid \phi_1\mathcal{U} \phi_2 , $
where $p\in \calAP$ is an atomic proposition, and $\bigcirc$ and
$\mathcal{U}$ are temporal modal operators for ``next'' and
``until''. Additional temporal logic operators are derived from basic
ones: $\lozenge \varphi \coloneqq \truev\; \mathcal{U}\varphi$
(eventually) and
$\square \varphi \coloneqq \neg \lozenge \neg \varphi$.  Given an
\ac{mdp} $M$, let $\calAP$ be a finite set of atomic propositions, and
a function $L: S \rightarrow 2^{\calAP}$ be a labeling function that
assigns a set of atomic propositions $L(s)\subseteq \calAP$ to each
state $s \in S$ that are valid at the state $s$. $L$ can be extended
to paths in the usual way, i.e., $L(s \rho)=L(s)L(\rho)$ for
$s\in S, \rho\in S^\omega$.  A path $\rho=s_0s_1\ldots \in S^\omega$
satisfies a temporal logic formula $\varphi$ if and only if $L(\rho)$
satisfies $\varphi$. In an \ac{mdp} $M$, a policy $f$ induces a
probability distribution over paths in $S^\omega$. The probability of
satisfying an \ac{ltl} formula $\varphi$ is the sum of probabilities
of all paths that satisfy $\varphi$ in the induced Markov chain
$M^f$.% Thus, two types of quantitative temporal logic objectives
% are of interested in general: One is to synthesize a policy that
% maximizes the probability of satisfying a given temporal logic
% constraint. The other is to synthesize a policy that maximizes the
% frequency of satisfying a given temporal logic formula. Later we will
% show that the first type can be treated as a special case of
% discounted reward problems, whileas the second type can be forumulated
% as average reward problems in \ac{mdp}s.

\begin{problem}
\label{prob}
Given an \ac{mdp} $M$ and an \ac{ltl} formula $\varphi$,
synthesize a policy that optimizes a quantitative performance
measure with respect to the formula $\varphi$ in the \ac{mdp} $M$.
% that maximizes the probability of
%   satisfying the temporal logic formula.
% \end{problem}
% \begin{problem}
% \label{prob:maxfreq}
% Given an \ac{mdp} $M$ and a tempora logic formula $\varphi$, 
% synthesize a policy that maximizes the frequency of 
% satisfying the temporal logic formula (to be introduced later). 
\end{problem}
We consider the probability of satisfying a temporal logic formula as
one quantitative performance measure. We also consider the expected
frequency of satisfying certain recurrent properties specified in an
\ac{ltl} formula. % The definitions of quantitative measures will be
% formally introduced in section~\ref{sec:ltl}.

By formulating a product \ac{mdp} that incorporates the underlying
dynamics of a given \ac{mdp} and its temporal logic specification, it
can be shown that Problem~\ref{prob} with different quantitative
performance measures can be formulated through pre-processing as
special cases of discounted-reward and average-reward problems
\cite{baier2008principles,brazdil2011two}. Thus,
in the following, we first introduce decomposition-based distributed
synthesis methods for large \ac{mdp}s with discounted-reward and
average-reward criteria. Then, we show the extension for solving
\ac{mdp}s with quantitative temporal logic constraints. % and
% Problem~\ref{prob:maxfreq} is then tranformed into finding a policy
% that maximizing the frequency of visiting some state in a particular
% set in the product \ac{mdp}. This transformation gives rise to an
% average reward problem. Problems~\ref{prob:maxprob} and
% \ref{prob:maxfreq} is then equivalent to the problem of synthesizing
% optimal policies with respect to discounted and average reward
% criteria in large-scale \ac{mdp}s.
\section{Decomposition of an \ac{mdp}}
\label{sec:decomp}
\subsection{Decomposition and its property}
To exploit the modular structure of a given \ac{mdp}, the initial step
is to decompose the state space into small subsets of states, each of
which can then be related to a small problem.  In this section, we
introduce some terminologies in decomposition of \ac{mdp}s from
\cite{dean1995decomposition}.
% After
% giving a formal definition of a decomposition, we prove an important
% property of a given decomposition, which is further employed for
% distributed synthesis.

Given an \ac{mdp} $M=\langle S, A, u_0,P\rangle$, let $\Pi$ be any
partition of the state set $S$. That is,
$\Pi= \{S_1,\ldots, S_N\}\subseteq 2^S$, $\emptyset \notin \Pi$,
$S_i \cap S_j =\emptyset$ when $i\ne j$ and $\bigcup_{i=1}^N S_i=S$.
A set in $\Pi$ is called a \emph{region}.  The \emph{periphery} of a
region $S_i$ is a set of states \emph{outside} $S_i$, each of which
can be reached with a non-zero probability by taking some action from
a state \emph{in} $S_i$. Formally,
$ \Peri(S_i)=\{s' \in S\setminus S_i \mid \exists (s, a) \in S_i\times
A, P(s,a)(s') >0\}.  $

Let $K_0 = \bigcup_{i=1}^N \Peri(S_i)$.  Given a region $ S_i\in \Pi$,
we call $K_i = S_i\setminus K_0$ the \emph{kernel} of $S_i$. We denote
$m_i$ the number of state-action pairs restricted to $K_i$, for each
$i=0,\ldots, N$. That is, $m_i$ is the cardinality of the set
$\{(s,a)\mid s\in K_i, a \in A(s)\}$. We call the partition
$\{K_i \mid 0\le i \le N\}$ a \emph{decomposition} of $M$.
The following property of a decomposition is exploited in distributed optimization.
\begin{lemma}
\label{lm:kernel}
Given a decomposition $\{K_i, i=0,1,\ldots, N\}$ obtained from
partition $\Pi = \{S_1,\ldots, S_N\}$, for a state $s\in K_i$ where
$i \ne 0$, if there is a state $s'$ and an action $a$ such that
$P(s', a)( s) \ne 0$, then either $s'\in K_0$ or $s'\in K_i$.
\end{lemma}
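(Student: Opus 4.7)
The plan is to argue by contradiction using the definition of the periphery. The intuition is that the kernel $K_i$ was deliberately carved out of $S_i$ by removing everything in $K_0$, so nothing in $K_i$ can lie in the periphery of any region. Hence no transition from outside $S_i$ (other than from $K_0$ itself) can land inside $K_i$.

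More concretely, I would first unpack the hypothesis $s \in K_i$ with $i \neq 0$, which by definition means $s \in S_i$ and $s \notin K_0 = \bigcup_{j=1}^N \Peri(S_j)$. Because $\Pi$ is a partition, the predecessor $s'$ lies in exactly one region $S_j$. I would split on whether $j = i$ or $j \neq i$.

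The main step is the case $j \neq i$. Here $s' \in S_j$ and $s \in S_i$, so $s \in S \setminus S_j$ since $S_i$ and $S_j$ are disjoint. Combined with $P(s', a)(s) > 0$, the definition of periphery gives $s \in \Peri(S_j) \subseteq K_0$, contradicting $s \notin K_0$. Thus this case cannot occur, and we must have $s' \in S_i$.

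In the remaining case $s' \in S_i$, there are two sub-cases: either $s' \in K_0$, in which case the conclusion is immediate, or $s' \notin K_0$, in which case $s' \in S_i \setminus K_0 = K_i$, giving the other alternative of the conclusion. I do not expect any real obstacle here beyond carefully tracking the set definitions; the argument is essentially bookkeeping on the partition, with the key observation being that membership in $K_i$ rules out membership in any $\Peri(S_j)$.
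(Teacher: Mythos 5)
Your proposal is correct and follows essentially the same argument as the paper: the key step in both is that a transition from a state in a different region $S_j$ into $s$ would force $s\in \Peri(S_j)\subseteq K_0$, contradicting $s\in K_i$. Your version merely organizes the cases by which region contains $s'$ rather than by assuming $s'\notin K_0\cup K_i$ outright, which is an equivalent bookkeeping choice.
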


\begin{proof} % By definition, it holds that $K_j\cap S_\ell =\emptyset$
  % for $j \ne \ell$, $ j, \ell \in \{ 1,\ldots, N\}$.
  Suppose $s' \notin K_0$ and $s'\notin K_i$, then it must be the case
  that $s'\in K_j$ for some $j\ne 0 $ and $j\ne i$. Since from state
  $s' \in S_j$, after taking action $a$, the probability of reaching
  $s\in S_i$ is non-zero, we can conclude that $s\in \Peri(S_j)$,
  which implies $s\in K_0$. The implication contradicts the fact that
  $s\in K_i$ since $K_0\cap K_i=\emptyset$. Hence, either $s'\in K_i$
  or $s'\in K_0$.
\end{proof}
\noindent \textbf{Example:} Consider the \ac{mdp} in Figure~\ref{fig:exmdp},
which is taken from \cite{Baire2004}. The shaded region shows a
partition $\Pi=\{S_1= \{s_4,s_5,s_6\},S_2=\{s_0,s_1,s_2,s_3,s_7\}\}$
of the state space. Then, $\Peri(S_1) =\{s_2,s_7\}$ and
$\Peri(S_2) =\{s_4\}$. We obtain a decomposition of $M$ as
$K_0= \cup_{i=1}^2 \Peri(S_i)= \{s_2,s_4,s_7\}$,
$K_1= S_1\setminus K_0= \{s_5, s_6\}$, and
$K_2=S_2\setminus K_0= \{s_0, s_1, s_3\}$. % The states in $K_0$ are
% drawn in blue circles.
It is observed that state $s_5$ can only be
reached with non-zero probabilities by actions taken from states $s_4$
and $s_6$.
\vspace{-3ex}
 \begin{figure}[ht]
\centering
\includegraphics[width=0.5\textwidth]{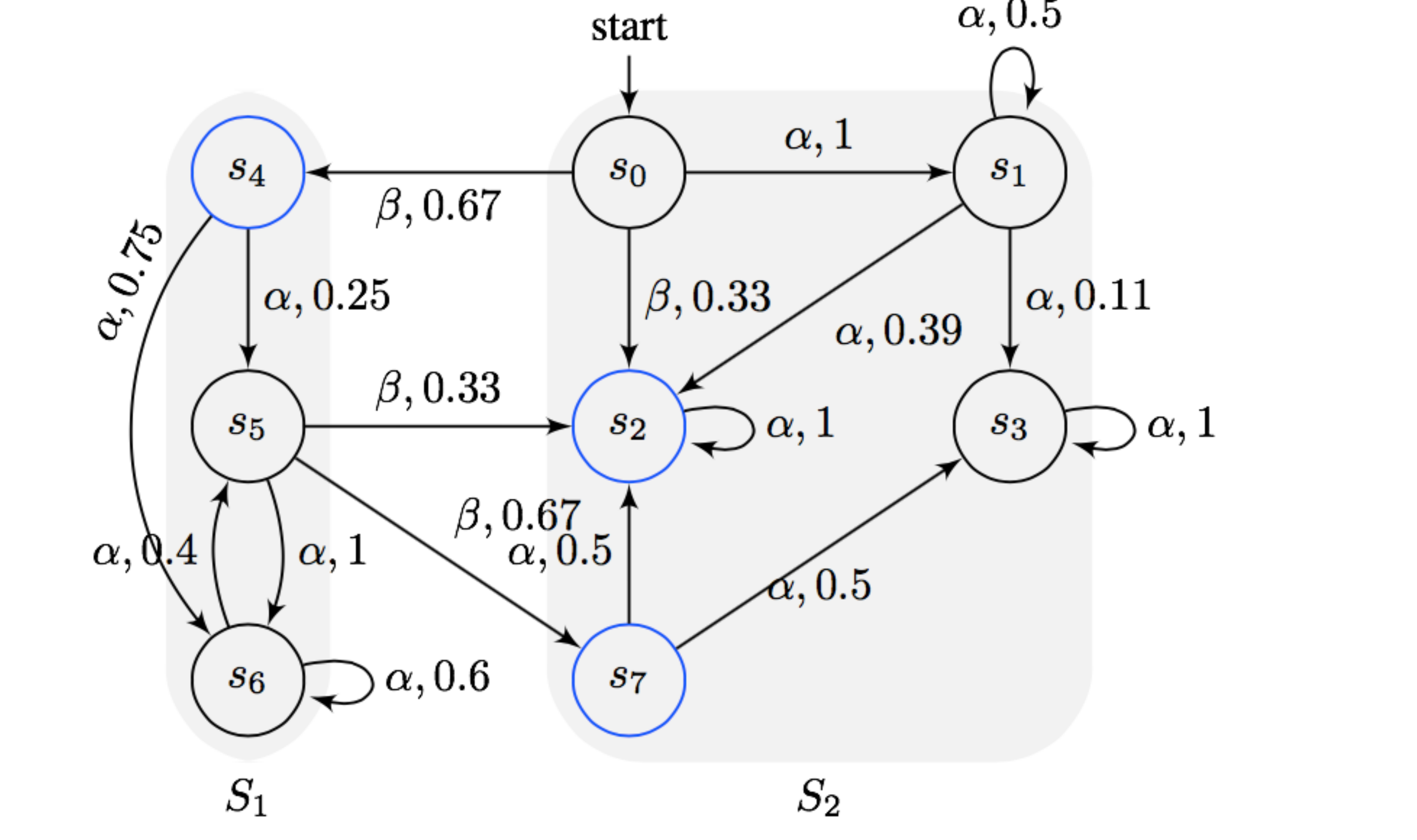}
 % \input{figures/examplemdp}
% \vspace{-2ex}
 \caption{Example of an \ac{mdp} with states $Q=\{s_i, i =0,\ldots,
   7\}$, actions $A=\{\alpha, \beta\}$, and transition
   probability function $P$ as indicated.}
\label{fig:exmdp} 
\end{figure}

%  In which follows we introduce a
% decomposition method for a subclass of \ac{mdp}s. % It is emphasized
% that the proposed distributed synthesis method does not require a
% particular way of decomposing the
% \ac{mdp}. 

% However, to derive a decomposition that
% improves the efficiency of distributed synthesis, sometimes we may
% need to group the state sets of two tightly coupled sub-systems, and
% some other times we may also want to split the state space in a single
% sub-system if it is too large and contains loosely coupled subsets of
% states.  In section~\ref{sec:decompmethod} we present a decomposition
% method suited for the proposed distributed synthesis method.

\subsection{A decomposition method for a subclass of \ac{mdp}s}
\label{sec:decompmethod}

Various methods are developed to derive a decomposition of an
\ac{mdp}, for example, decompositions based on partitioning the state
space of an \ac{mdp} according to the communicating classes in the
induced graph (defined in the following) of that \ac{mdp} (see a
survey in \cite{Daoui2010}).  For the distributed synthesis method
developed in this paper, it will be shown later in
Section~\ref{sec:planning} that the number of state-action pairs and
the number of states in $K_i$ are the number of variables and the
number of constraints in a sub-problem, respectively.  Thus, we prefer
a decomposition that meets one simple desirable property: For each
$i=0,1,\ldots, N$, the number $m_i$ of state-action pairs in $K_i$ is
small in the sense that the classical linear programming algorithm can
efficiently solve an \ac{mdp} with state-action pairs of this
size. Next, we propose a method that generates decompositions which
meet the aforementioned desirable property for a subclass of
\ac{mdp}s. For an \ac{mdp} in this subclass, its induced graph is
\emph{Planar} \footnote{A graph is planar if it can be drawn in the
  plane in such a way that no two edges meet each other except at a
  vertex to which they are incident.}. It can be shown that \ac{mdp}s
derived from classical gridworld examples, which have many practical
applications in robotic motion planning, are in this
subclass.% It is also noted that in
We start by relating an \ac{mdp} with a directed graph.
\begin{definition}The \emph{labeled digraph} induced from an \ac{mdp}
  $M =\langle S, A, u_0, P\rangle $ is a tuple
  $G= \langle S, E\rangle$ where $S$ is a set of nodes, and
  $E\subseteq S\times A\times S$ is a set of labeled edges such that
  $(s,a, s')\in E$ if and only if $P(s,a)(s') >0$.
\end{definition}
Let $n = \card{S}$ be the total number of nodes in the graph. A
partition of states in the \ac{mdp} gives rise to a partition of nodes
in the graph. Given a partition $\Pi$ and a region $S_i \in \Pi$, a
node is said to be \emph{contained} in $S_i$ if some edge of the
region is incident to the node. A node contained in more than one
regions is called a \emph{boundary} node. That is, $s \in S_i$ is a
boundary node if and only if there exists $(s,a,s') \in E$ or
$(s',a, s)\in E$ with $s' \notin S_i$. Formally, the boundary nodes of
$S_i$ are $B_i = \mathsf{In}_i \cup \mathsf{Out}_i $ where
$\mathsf{In}_i = \{s\in S_i \mid \exists j\ne i, s'\in S_j ,a\in
A(s'), \text{ and } (s', a, s) \in E\} $
and
$\mathsf{Out}_i= \{s \in S_i \mid \exists s'\in S\setminus S_i, a\in
A(s), \text{ and } (s, a, s') \in E\} $.
We define $B_0=\bigcup_{i=1}^N B_i $. Note that since
$\bigcup_{i=1}^N \mathsf{In}_i = K_0$, $K_0 \subseteq B_0$.
% and
% $\card{B_0} - \card{K_0 } = \card{\cup_{i=1\ldots N} \mathsf{Out}_i
%   \setminus \cup_{i=1\ldots N} \mathsf{In}_i }$.
% In most cases, the difference between $\card{B_0}$ and $\card{K_0}$ is
% small and
We use the number of boundary nodes as an upper bound on the
size of the set $K_0$ of states.

% \textcolor{red}{Todelete: ? \paragraph*{Example (cont.)} Consider the \ac{mdp} in
% Figure~\ref{fig:exmdp} and the partition $\Pi$ given earlier. For
% $S_1= \{s_4, s_5,s_6\}$, its set of boundary nodes in the induced
% graph is $B_1= \{s_4,s_5\}$ with $\mathsf{In}_1=\{s_4\}$ and
% $\mathsf{Out}_1= \{s_5\}$. We also obtain $B_2=
% \{s_0,s_2,s_7\}$. Thus, the set $B_0 = B_1\cup B_2 $ is a superset of $K_0 = \{s_4,s_2,s_7\}$.}

\begin{definition}\cite{federickson1987fast} 
An \emph{$r$-division
  of an $n$-node graph} is a  partition of nodes into $O(n/r)$
subsets, each of which have $O(r)$ nodes and $O(\sqrt{r})$
boundary nodes. 
\end{definition} 

Reference \cite{federickson1987fast} shows an algorithm that divides a planar graph of $n$ vertices
into an $r$-division in $O(n \log n)$ time.%   For the
% algorithm on how to generate the $r$-division given a planar graph,
% the reader is referred to \cite{federickson1987fast}.  It is worth
% mentioning that an $r$-division found using this algorithm satisfies
% the following properties.
% \begin{inparaenum}[1)]
% \item Each boundary node is contained in at most three regions.
% \item Any region that is not connected consists of connected
%   components, all of which share boundary nodes with exactly the
%   same set of either one or two connected regions.
% \end{inparaenum}

\begin{lemma}
\label{lma:decomp}Given a partition $\Pi$ of an \ac{mdp} with $n$ states obtained with a
  $r$-division the induced graph, the number of states in $K_0$ is
  upper bounded by $O(n/\sqrt{r})$ and the number of
  states in $K_i$ is upper bounded by $O(r)$.
\end{lemma}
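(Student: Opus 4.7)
The plan is to derive both bounds directly from the defining properties of an $r$-division together with the containment $K_0 \subseteq B_0$ that was observed just before the definition. The $r$-division guarantees two things about each region $S_i$ in the partition $\Pi$: its total number of nodes is $O(r)$, and its number of boundary nodes is $O(\sqrt{r})$, and moreover there are $O(n/r)$ regions in total. These are exactly the two ingredients needed.

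First I would handle the easy half. Since $K_i = S_i \setminus K_0 \subseteq S_i$ for every $i\ge 1$, the cardinality bound $\card{S_i} = O(r)$ from the $r$-division immediately yields $\card{K_i} = O(r)$. No further argument is required here.

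Next I would address $\card{K_0}$. The key observation, already recorded in the paragraph preceding the lemma, is that $\bigcup_{i=1}^N \mathsf{In}_i = K_0$, and since $\mathsf{In}_i \subseteq B_i$, this gives $K_0 \subseteq B_0 = \bigcup_{i=1}^N B_i$. Thus $\card{K_0} \le \sum_{i=1}^N \card{B_i}$. Applying the two quantitative bounds of the $r$-division (namely $N = O(n/r)$ and $\card{B_i} = O(\sqrt{r})$ for each $i$) gives
\begin{equation*}
\card{K_0} \le \sum_{i=1}^N \card{B_i} \le O(n/r)\cdot O(\sqrt{r}) = O(n/\sqrt{r}),
\end{equation*}
which is the desired bound.

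There is essentially no hard step here; the only mild subtlety is making the containment $K_0 \subseteq B_0$ explicit and verifying that $\mathsf{In}_i \subseteq B_i$ so that summing the boundary-node bounds over the $O(n/r)$ regions legitimately dominates $\card{K_0}$. Once that is pinned down, both inequalities follow purely by arithmetic from the definition of an $r$-division, so the proof should be quite short.
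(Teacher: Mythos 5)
Your proposal is correct and follows essentially the same route as the paper: bound $\card{K_i}$ by $\card{S_i}=O(r)$, and bound $\card{K_0}$ via the containment $K_0\subseteq B_0$ together with the count of $O(n/r)$ regions each contributing $O(\sqrt{r})$ boundary nodes, giving $O(n/\sqrt{r})$. The only cosmetic difference is that the paper justifies the total boundary-node count by noting each boundary node lies in at most three regions, whereas you use the (equally valid, and for an upper bound simpler) union bound $\card{K_0}\le\sum_i\card{B_i}$.
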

\begin{proof}% The kernel $K_i $ has $O(r-\sqrt r )$ states.
  Since each boundary node is contained in at most three regions and
  at least one region by the property of an $r$-division
  \cite{federickson1987fast}, the total number of boundary nodes is
  $O(\sqrt r \cdot \frac{n}{r}) =O(n/\sqrt{r})$. The number of states
  in $K_i$ is upper bounded by the size of $S_i$, which is $O(r)$.
\end{proof}

To obtain a decomposition, the user specifies an approximately upper
bound on the number of variables for all sub-problems. Then, the
algorithm decides whether there is an $r$-division for some $r$ that
gives rise to a decomposition that has the desirable property.

\begin{remark}
  Although the decomposition method proposed here is applicable for a
  subclass of \ac{mdp}s, the distributed synthesis method developed in
  this paper does not constrain the way by which a decomposition is
  obtained. A decomposition may be given or obtained
  straight-forwardly by exploiting the existing modular structure of
  the system. Even if a decomposition does not meet the desirable
  property for the distributed synthesis method, the proposed method
  still applies as long as each subproblem derived from that
  decomposition (see Section~\ref{sec:admm}) can be solved given
  the limitation in memory and computational capacities.
\end{remark}

\section{Distributed synthesis: Discounted-reward and average-reward
  problems}
\label{sec:planning}
In this section, % we show how to solve the synthesis problems of a large
% \ac{mdp} with discounted-reward and average-reward criteria in a
% distributed manner given a decomposition of the \ac{mdp}.  First,
we show that under a decomposition, the original \ac{lp} problem for a
discounted-reward or average-reward case can be formulated into one
with a sparse constraint matrix. Then, we employ block-splitting
algorithm based on \ac{admm} in \cite{Parikh2013} for solving the
\ac{lp} problem in a distributed manner.
% \subsection{Preliminaries on ADMM}

% \paragraph*{\ac{admm}}We now describe the \emph{alternating direction
%   method of multipliers}  \cite{boyd2011distributed} for the generic convex constrained
% minimization problem:
% \begin{align*}
% &\min g(z)\\
% &\text{subject to } z \in \mathbf{C}
% \end{align*}
% where $g$ is  closed proper convex and $\mathbf{C}$ is closed nonempty
% convex. The \ac{admm} algorithm is 
% \begin{align*}
% \begin{split}
%  z^{k+1/2} & := \prox_{g}(z^k -\tilde z^k)\\
% z^{k+1} &:= \Pi_{\bf C}(z^{k+1/2} +\tilde z^k)\\
% \tilde z^{k+1} &:= \tilde z^k + z^{k+1/2} -z^{k+1},
% \end{split}
% \end{align*}
% where $k$ is an iteration counter, $\Pi_{\bf C}$ denoted the
% (Euclidean) projection onto $\bf C$ and 
% \[
% \prox_{g}(v)= \arg\min_{x}\left( g(x)+(\rho/2)\norm{x-v}_2^2 \right)
% \]
% is the \emph{proximal operator} \cite{moreau1965proximite} of $g$ with
% parameter $\rho >0 $. The algorithm splits the handling of the
% objective function $g$ (the first step) and the handling of the
% constraint set $\bf C$ (the second step). The third step is called
% dual update, which coordinates the two other steps and results in
% convergence to a solution of the original problem. Different choices
% of $\rho$ will affect the convergence rate but the algorithm is
% ensured to converge with any choice of $\rho$.
\subsection{Discounted-reward case}

Given a decomposition $\{K_i \mid i=0,1,\ldots, N\}$ of an \ac{mdp},
let $x_i$ be a vector consisting of variables $x(s,a)$ for all
$s\in K_i$ with all actions enabled from $s$.  Let
$\iota_i: K_i\rightarrow \{1, \ldots, \card{K_i}\}$ be an index
function.  The constraints in \eqref{eq:constraintdiscounted-1} can be
written as: For each $s\in K_0$,
\begin{multline}
\label{eq:k0discounted}\sum_{a\in \act(s)} x(s,a)= u_0(s)+ \\\gamma
\cdot  \sum_{i=0}^N \sum_{s'\in K_i}\sum_{a'\in \act(s')} x(s',a')\cdot
P(s',a')(s),
\end{multline}
and for each $s\in K_i$, $i=1,\ldots, N$, 
\begin{multline}
\label{eq:kidiscounted}
\sum_{a\in \act(s)} x(s,a)= u_0(s)+\\ \gamma \cdot \big(
\sum_{s'\in K_0}\sum_{a'\in \act(s')} x(s',a')\cdot
P(s',a')(s) \\
+ \sum_{s'\in K_i}\sum_{a'\in \act(s')} x(s',a')\cdot P(s',a')(s) \big).
\end{multline}
Recall that, in Lemma~\ref{lm:kernel}, we have proven that each
$s\in K_i$ with $i\ne 0$ can only be reached with non-zero probabilities from states
in $K_0$ and $K_i$. As a result, % for each state $s$ in $K_0$ and
% action $a \in \act(s)$, the constraint on variable $x(s,a)$ is
% related with $x_i$, for all $i=0, 1,\ldots, N$. Meanwhile,
for each state $s$ in $K_i$ with $i\ne 0$ and each action $a \in \act(s)$,
the constraint on variable $x(s,a)$ is only related with variables in
$x_i$ and $x_0$.  Let $x = (x_0, x_1,\ldots, x_N)$. We denote the
number of variables in $x_i$ by $m_i$ and the number of states in the
set $K_i$ by $n_i$.  Let $m=\sum_{i=0}^N m_i$.  The \ac{lp} problem in
\eqref{eq:constraintdiscounted} is then
\begin{equation}
\label{eq:transformdiscountedLP}
  \min _{x \in \mathbb{R}^m_+}\sum_{j=0}^N c_j^T x_j, \quad
 \text{subject to } Ax   = b,  
\end{equation}
where $c_j^Tx_j =
\sum_{s \in K_j} \sum_{a\in \act(s)} -R(s,a) x(s,a)$,

 \[\small A = \begin{bmatrix}
  A_{00} & A_{01} &A_{02} &\ldots  & A_{0N}\\
  A_{10} & A_{11} & &  &  \\
  A_{20} & & A_{22} &  & \text{\huge0} \\
  \vdots &\text{\huge0} &  &  \ddots &  \\
  A_{N0} & &  &  & A_{NN}
\end{bmatrix}, \quad b = \begin{bmatrix}
b_0\\
b_1\\
b_2\\
\vdots\\
b_N
\end{bmatrix},\]
and $b_i \in \mathbb{R}^{n_i}$ where $b_i(k)= u_0(s)$ if $\iota_i(s)=k$. The
transformation from \eqref{eq:k0discounted} and
\eqref{eq:kidiscounted} to \eqref{eq:transformdiscountedLP} is
straightforward by rewriting the constraints and we omit the detail.

% \color{red} To delete?: \paragraph*{Example (cont.)} We revisit the \ac{mdp} in
% Figure~\ref{fig:exmdp} to illustrate. Let $\gamma$ be the discounting
% factor. Under the decomposition $K_0=\{s_4,s_2,s_7\}, K_1=
% \{s_5,s_6\}$, and $K_2=\{s_0,s_1,s_3\}$, we
% obtain vectors of variables 
% $x_0 =[x(s_2,\alpha), x(s_4,\alpha), x(s_7,\alpha)]^T$,
% $x_1=[x(s_5,\alpha),x(s_5,\beta), x(s_6,\alpha)]^T$, $ x_2=[x(s_0,\alpha),
% x(s_0,\beta), x(s_1,\alpha), x(s_3,\alpha)]^T$. Under the initial
% distribution $u_0(s_0)=1$ and $u_0(s)=0$ for $s\ne s_1$, the
% constraint for state $s_2$ is
% \[
% x(s_2,\alpha)= \gamma*(0.5 x(s_7,
% \alpha)+ x(s_2,\alpha)+0.33x(s_5,\beta)+0.33x(s_0,\beta) + 0.39 x(s_1,\alpha)),
% \]
% which is related to variables $x(s_7,\alpha)$  in $x_0$,
% $x(s_5,\beta)$ in $x_1$, and $x(s_0,\beta), x(s_1,\alpha)$ in $x_2$.

% The constraint for state $s_5$ is
% \[
% x(s_5,\alpha)+x(s_5,\beta) = \gamma*( 0.25x(s_4,\alpha)+0.4x(s_6,\alpha)),
% \]
% which is related to variables $x(s_4,\alpha)$ in $x_0$ and
% $x(s_6,\alpha)$ in $x_1$.
% \normalcolor
\subsection{Average-reward case}
For an ergodic \ac{mdp}, the constraints in the \ac{lp} problem of
maximizing the average reward, described by \eqref{eq:averageLP-1},
can be rewritten in the way just as how
\eqref{eq:constraintdiscounted-1} is rewritten into
\eqref{eq:k0discounted} and \eqref{eq:kidiscounted} for the
discounted-reward problem.  The difference is that for the average-reward
case, we let $\gamma=1$ and replace $u_0(s)$ with $0$, for all
$s\in S$, in \eqref{eq:k0discounted} and \eqref{eq:kidiscounted}. An
additional constraint for the average-reward case is that
$\sum_{s\in S}\sum_{a\in \act(s)} x(s,a)=1$. Hence, for an
average-reward problem in an ergodic \ac{mdp}, the corresponding LP
problem in \eqref{eq:averageLP} is formulated as
\begin{equation}
\label{eq:transformaverageLP}
\min_{x \in \mathbb{R}^{m}_+} \sum_{j=0}^N c_j^T
    x_j, \quad \text{subject to }
   \begin{bmatrix}
\mathbf{1}^T\\
A \end{bmatrix} x = \begin{bmatrix}
1\\
\bm 0
\end{bmatrix}
\end{equation}
where $\mathbf{1}^T$ is a row vector of $m$ ones,
$c_j^Tx_j = \sum_{s\in K_j} \sum_{a\in \act(s)} -R(s,a) x(s,a)$ and
the block-matrix $A$ has the same structure as of that in the
discounted case with $\gamma=1$. We can compactly write the constraint
in \eqref{eq:transformaverageLP} as $A'x=b$ where $A' $ is a sparse
constraint matrix similar in structure to the matrix $A$ in the
discounted-reward case. 
\subsection{Distributed optimization algorithm}
\label{sec:admm}
We solve the \ac{lp} problems in \eqref{eq:transformdiscountedLP} and
\eqref{eq:transformaverageLP} by employing the block splitting
algorithm based on \ac{admm} in \cite{Parikh2013}. We only present the
algorithm for the discounted-reward case in
\eqref{eq:transformdiscountedLP}. The extension to the average-reward
case is straight-forward.

First, we introduce new variables $y_i$ and
% \in\mathbb{R}^{n_i}$, where
% $n_i=\card{K_i}$ for $i=0,\ldots, N$. For $i=0,\ldots, N$,
let $f_i
(y_i )= I_{\{b_i\}}(y_i)$, where for a convex set $C$,
$I_C$
is a function defined by $I_C(x)=0$
for $x\in
C$, $I_C(x)=\infty$ for $x\notin C$. Then, adding the term $f_i (y_i
)$ into the objective function enforces $y_i = b_i$.  Let $g_i(x_i) =
c_i^T x_i + I_{\mathbb{R}^{m_i}_+}(x_i)$.
The term $I_{\mathbb{R}^{m_i}_+}(x_i)$
enforces that $x_i$
is a non-negative vector. We rewrite the LP problem in
\eqref{eq:transformdiscountedLP} as follows.
\begin{align}
\label{eq:admmform}
\begin{split}
&\min_{x,y} \sum_{i=0}^{N} f_i(y_i) + \sum_{i=0}^N g_i(x_i) \\
\text{subject to }   & y_0= \sum_{i=0}^NA_{0i} x_i \text{ and for }  i =1, \ldots, N, \;\\
& y_i= A_{i0}x_0 + A_{ii}x_i.
\end{split}
\end{align}
With this formulation, we modify the block splitting algorithm in
\cite{Parikh2013} to solve \eqref{eq:admmform} in a parallel and
distributed manner (see the Appendix for
the details).
%  The detail of the algorithm
% is in Appendix.
The algorithm takes parameters $\rho$,
$\epsilon^{rel}$
and $\epsilon^{abs}$:
$\rho>0$
is a penalty parameter to ensure the constraints are satisfied,
$\epsilon^{rel}>0$
is a relative tolerance and $\epsilon^{abs}>
0$ is an absolute tolerance. The choice of $\epsilon^{rel}$ and $
\epsilon^{abs}$ depends on the scale of variable values. In synthesis
of \ac{mdp}s, $\epsilon^{rel}$
and $
\epsilon^{abs}$ may be chosen in the range of $10^{-2}$ to
$10^{-6}$.
The algorithm is ensured to converge with any choice of $\rho$
and the value of $\rho$ may affect the convergence
rate. % Reference \cite{boyd2011distributed}
% proposes a method that updates parameter $\rho$
% for each iteration, for improving the convergence in practice.

\section{Extension to quantitative temporal logic constraints}
\label{sec:ltl}
We now extend the distributed control synthesis methods for \ac{mdp}s
with discounted-reward and average-reward criteria to solve
Problem~\ref{prob} in which quantitative temporal logic constraints
are enforced.

\subsection{Maximizing the probability of satisfying an \ac{ltl}
  specification}
\label{subsec:maxprob}
% \subsection{Preliminaries on synthesis with quantative temporal logic constraints}

% We consider to synthesize a policy that maximizes the probability of
% satisfying an \ac{ltl} formula $\varphi$ from a given initial
% distribution of states. 
\noindent \textbf{Preliminaries}
% We first provide some background on centralized synthesis methods for
% this type of objectives \cite{Baire2004}.
Given an \ac{ltl} formula $\varphi$ as the system
specification, one can always represent it by a \ac{dra}
$\calA_\varphi=\langle Q ,2^{\calAP}, T,I, \mathsf{Acc} \rangle$
where $Q$ is a finite state set, $2^{\calAP}$ is the alphabet, $I\in
Q$ is the initial state, and $T: Q\times 2^{\calAP} \rightarrow Q$
the transition function.  The acceptance condition $\mathsf{Acc}$ is a
set of tuples $\{(J_i, H_i) \in 2^Q \times 2^Q \mid i =
1,\ldots,\ell\}$. The \emph{run} for an infinite word $w= \sigma_0\sigma_1\ldots \in
(2^{\calAP})^\omega$ is the infinite sequence of states $q_0q_1\ldots
\in Q^\omega$ where $q_0=I$ and $q_{i+1}=T(q_i, \sigma_i)$ for $i\ge 0$. A run
$\rho=q_0q_1\ldots $ is accepted in $\calA_\varphi$ if there exists at
least one pair $(J_i,H_i) \in \mathsf{Acc}$ such that
$\mathsf{Inf}(\rho)\cap J_i =\emptyset$ and $\mathsf{Inf}(\rho)\cap
H_i \ne \emptyset$ where $\mathsf{Inf}(\rho)$ is the set of states
that appear infinitely often in $\rho$.

Given an \ac{mdp} $M= \langle S, A , u_0, P\rangle$ augmented with a
set $\calAP$ of atomic propositions and a labeling function
$L: S\rightarrow 2^{\calAP}$, one can compute the product \ac{mdp}
% \begin{definition}
%   Given an \ac{mdp} $M = \langle S, A , u_0, P, \calAP, L \rangle$ and
%   a \ac{dra}
%   $\mathcal{A}_\varphi= \langle Q ,2^{\calAP}, T, I, \{(J_i,H_i)\mid i
%   =1,\ldots,\ell\} \rangle$,
%   the \emph{product \ac{mdp} with Rabin objective} is
$ \mathcal{M}= M \ltimes \mathcal{A}_\varphi =\langle V,A, \Delta,v_0,
\mathsf{Acc} \rangle $
with the components defined as follows: $ V=S\times Q$ is the set of
states. $A$ is the set of actions. The initial probability
distribution of states is $\mu_0: V\rightarrow [0,1]$ such that given
$v=(s,q)$ with $q=T(I, L(s))$, it is that $\mu_0(v) = u_0(s)$.
$\Delta: V\times A \rightarrow \calD(V)$ is the transition probability
function. Given $v= (s,q)$, $\sigma$, $v'=(s',q')$ and
$q'= T(q, L(s'))$, let $\Delta (v,\sigma)(v')= P(s,\sigma)(s')$.  The
Rabin acceptance condition is
$\mathsf{Acc}=\{(\hat{J_i}, \hat{H_i}) \mid \hat{J_i} =S\times J_i,
\hat{H_i}=S\times H_i , i =1,\ldots,\ell\}$. %, % which is obtained by
  % lifting of the set $J_i,H_i \subseteq S$ the acceptance condition of
  % $\calA_\varphi$ into $\calM$.
% \end{definition}

By construction, a path $\rho = v_0 v_1\ldots \in V^\omega$ satisfies
the \ac{ltl} formula $\varphi$ if and only if there exists
$i\in \{1,\ldots, \ell\}$, $\Inf(\rho)\cap \hat{J}_i = \emptyset$ and
$\Inf(\rho)\cap \hat{H}_i \ne \emptyset$. To maximize the probability
of satisfying $\varphi$, the first step is to compute the set of
\emph{end components} in $\calM$, each of which is a pair $(W, f)$
where $W\subseteq V $ is non-empty and $f: W \rightarrow 2^A$ is a
function such that for any $v\in W$, for any $a\in f(v)$,
$\sum_{v'\in W}\Delta(v, a)( v')=1$ and the induced directed graph
$(W, \rightarrow_f)$ is strongly connected. Here, $v\rightarrow_f v'$
is an edge in the graph if there exists $a\in f(v)$,
$\Delta(v, a)( v') >0$.  An end component $(W,f)$ is \emph{accepting}
if $W\cap \hat{J_i} =\emptyset$ and $W\cap \hat{H_i} \ne \emptyset$
for some $i \in \{1,\ldots,\ell\}$.

Let the set of \ac{aec}s in $\calM$ be $\AEC(\calM)$ and the set of
\emph{accepting end states} be
$\calC= \{v \mid \exists (W,f)\in \AEC(\calM), v\in W\}$. Once we
enter some state $v \in \calC$, we can find an \ac{aec} $(W,f)$ such
that $v\in W$, and initiate the policy $f$ such that for some
$i\in \{1,\ldots,\ell\}$, states in $\hat{J_i}$ will be visited a
finite number of times and some state in $\hat{H_i} $ will be visited
infinitely often.

\vspace{1ex}
\noindent\textbf{Formulating the \ac{lp} problem}
An optimal policy that maximizes the probability of satisfying the
specification also maximizes the probability of hitting the set of
accepting end states $\calC$.  Reference \cite{gpu2014} develops
GPU-based parallel algorithms which significantly speed up the
computation of end components for large \ac{mdp}s. After computing the
set of \ac{aec}s, we formulate the following \ac{lp} problem to
compute the optimal policy using the proposed decomposition and
distributed synthesis method for discounted-reward cases.

  Given a product \ac{mdp}
  $\calM= \langle V, A, \Delta, \mu_0 , \acc \rangle$ and the set
  $\calC$ of accepting end states, the modified product \ac{mdp} is
  $\tilde \calM = \langle ( V \setminus \calC) \cup\{\sink\}, A,
  \tilde \Delta,\tilde\mu_0, R \rangle$
  where $(V \setminus \calC) \cup\{\sink\}$ is the set of states
  obtained by grouping states in $\calC$ as a single state
  $\sink$. For all $a\in A$, $\tilde \Delta(\sink, a)(\sink)=1$ and
  $\tilde \Delta(v, a)( \sink) = \sum_{v'\in \calC} \Delta(v,a)(v')$.
  The initial distribution $\tilde \mu_0$ of states is defined as
  follows: For $v\in V\setminus \calC$, $\tilde \mu_0(v) =\mu_0(v)$,
  and $\tilde\mu_0(\sink) = \sum_{v\in \calC} \mu_0 (v)$. The reward
  function
  $R: \left( (V \setminus \calC) \cup\{\sink\} \right)\times A
  \rightarrow \mathbb{R} $
  is defined such that for all $v$ that is not $\sink$,
  $R(v, a) = \sum_{v' \in (V \setminus \calC) \cup\{\sink\}} \tilde
  \Delta(v,a)( v') \cdot \bm 1_{\{\sink\}}(v')$
  where $\bm 1_{X}(v)$ is the indicator function that outputs $1$ if
  and only if $v\in X$ and $0$ otherwise. For any action
  $a\in A(\sink)$, $R(\sink,a ) =0$.

% Under a memoryless policy in the modified product \ac{mdp}
% $f: (V\setminus \calC)\cup \{\sink \} \rightarrow \calD(A)$, the
% long-run discounted value with $\gamma=1$ and the initial distribution
% $\tilde \mu_0$ of states gives rise to 
% \[
% \val^f(v)= E_{\tilde\mu_0}^f\left[\sum_{n=0}^\infty R(X_n, \theta_n)\right]
% \]
% where $X_n$, $\theta_n$ are the random variables for the $n$-th state
% and $n$-th action in the Markov chain induced by $f$ from $\tilde
% \calM$. 
  By definition of reward function, the discounted reward
  with $\gamma=1$ from state $v$ in the modified product \ac{mdp}
  $\tilde \calM$ is the probability of reaching a state in $\calC$
  from $v$ under policy $f$ in the product \ac{mdp} $\calM$.  Hence,
  with a decomposition of $\calM$, the proposed distributed synthesis
  method for discounted-reward problems can be used to compute the
  policy that maximizes the probability of satisfying a given \ac{ltl}
  specification.
  % Note that
% when the system runs into the state $\sink$, actually it runs into a
% state $v$ in the set $\calC$ of accepting end states, we can
% immediately identify an accepting end component $(W,f)$ where
% $v \in W$, and apply $f$ to visit every state in $W$ infinitely often.

\subsection{Average reward under B\"uchi acceptance conditions}
\noindent \textbf{Preliminaries} Consider a temporal logic formula $\varphi$ that can be expressed as a
\ac{dba} $\calA_\varphi=\langle Q ,2^{\calAP}, T,I, F_\varphi \rangle$
where $Q, 2^{\calAP}, T, I$ are defined similar to a \ac{dra} and
$F_\varphi \subseteq Q$ is a set of \emph{accepting states}. A run
$\rho$ is accepted in $\calA_\varphi$ if and only if
$\Inf(\rho)\cap F_\varphi \ne \emptyset$.  Given an \ac{mdp}
$M =\langle S, A, u_0, P, \calAP, L \rangle$ and a \ac{dba}
$\mathcal{A}_\varphi= \langle Q ,2^{\calAP}, T, q_0, F_\varphi
\rangle$,
the \emph{product \ac{mdp} with B\"uchi objective} is
$ \mathcal{M}= M \ltimes \mathcal{A}_\varphi =\langle V, A, \Delta,
\mu_0, F \rangle $
where components $V, A, \Delta, \mu_0$ are obtained similarly as in
the product \ac{mdp} with Rabin objective.  The difference is that
$F \subseteq S\times F_\varphi$ is the set of accepting states. A path
$\rho = v_0 v_1\ldots \in V^\omega$ satisfies the \ac{ltl} formula
$\varphi$ if and only if $\Inf(\rho)\cap F \ne \emptyset$. % , where
% $\Inf(\rho)$ is the set of states appearing infinitely often in
% $\rho$.

\vspace{1ex}
\noindent\textbf{Formulating the \ac{lp} problem} For a product \ac{mdp}
$\calM$ with B\"uchi objective, we aim to synthesize a policy that
maximizes the expected frequency of visiting an accepting state in the
product \ac{mdp} $\mathcal{M} = M\ltimes \calA_\varphi$.  This type of
objectives ensures some recurrent properties in the temporal logic
formula are satisfied as frequently as possible. For example, one such
objective can be requiring a mobile robot to maximize the frequency of
visiting some critical regions.

This type of objectives can be formulated as an average-reward problem
in the following way: Let the reward function
$R: V\times A\rightarrow \mathbb{R}$ be defined by
$R(v,a ) = \sum_{v'\in V} \Delta(v,a)(v') \cdot \mathbf{1}_{ F }(v')$.
By definition of the reward function, the optimal policy with respect
to the average-reward criterion is the one that maximizes the
frequency of visiting a state in $F$. If the product \ac{mdp} is
ergodic, we can then solve the resulting average-reward problem by the
distributed optimization algorithm with a decomposition of product
\ac{mdp} $\calM$.
% Planning for other quantitive objectives, such as, discounted temporal
% properties \cite{DeAlfaro2005}, mean payoff \cite{Tomita2012} and
% weighted average payoff \cite{wolff2012optimal} for B\"uchi objectives
% in irreducible MDPs, can also be implemented in this distributed
% framework using appropriate decomposition techniques and the solution
% approaches for total accumulated reward, or average-reward
% criteria. The generalization is straightforward and thus omitted here.

% \subsection{Maximizing the frequency in satisfying the specification}
%should be considered.
% Given an upper
% bound $U$ on the number of state-action pairs in each $K_i$, $i=0,\ldots, M$,
% determining a partition that satisfies $\forall i \in 0,\ldots, M$,
% $m_i < U$ can be shown to be NP-complete. However, in practice, we can
% use heuristic methods to obtain relatively good partitions  based on the structure
% of the underlying large-scale \ac{mdp}.

\section{Case studies}
\label{sec:examples}
We demonstrate the method with three robot motion planning examples. % In the first example, we compare solving an \ac{mdp} with
% the distributed planning algorithm to solving that problem using CVX
% \cite{cvx}. The remaining two examples are large \ac{mdp}s, which
% cannot be solved with CVX.
All the experiments were run on a machine with Intel Xeon 4 GHz,
8-core CPU and 64 GB RAM running Linux. The distributed optimization
algorithm is implemented in MATLAB. The decomposition and other
operations are implemented in Python.

\begin{figure}[ht]
\centering
\begin{subfigure}[b]{0.15\textwidth}
\includegraphics[width=\textwidth]{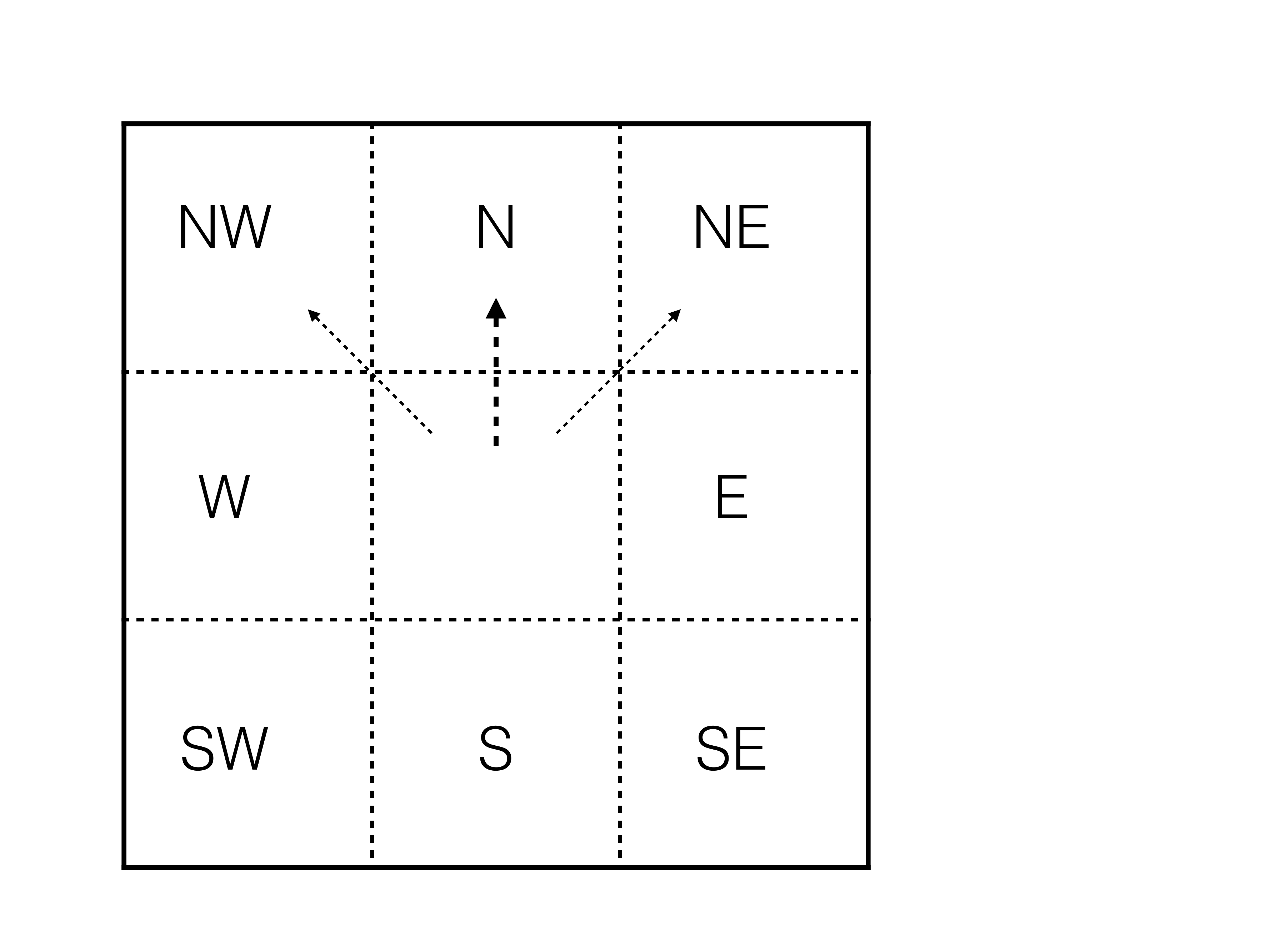}
\caption{}
\label{fig:singlestep}
\end{subfigure}
\begin{subfigure}[b]{0.32\textwidth}
\includegraphics[trim = 0mm 25mm 0mm 20mm, clip, width=\textwidth]{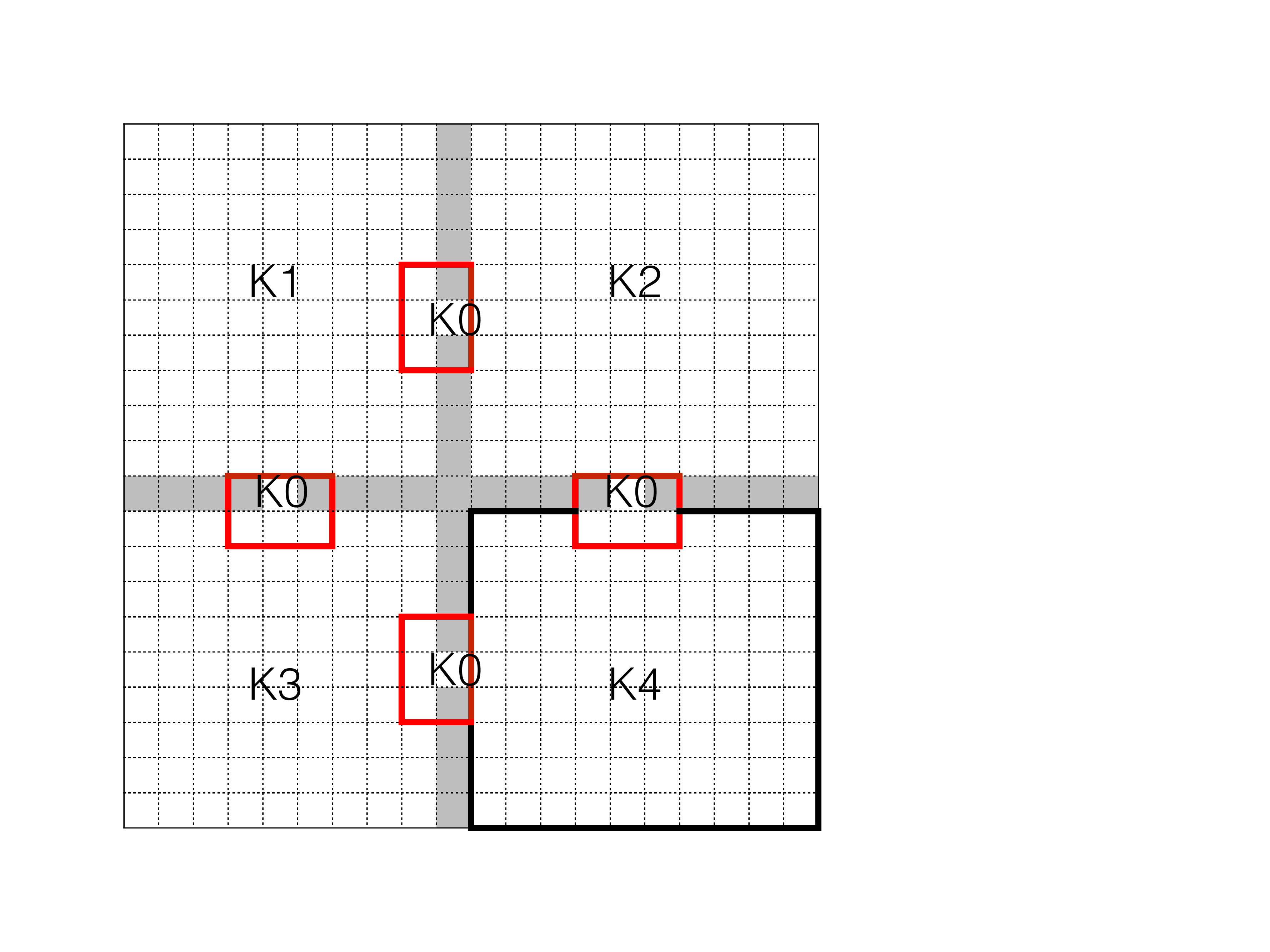}
\caption{}
\label{fig:decompgw}
\end{subfigure}
\label{fig:gridworld2}
\caption{(a) A fraction of a $m\times n $ gridworld. The dash arrow
  represents that if the robot takes action `N', there are non-zero
  probabilities for it to arrive at NW, N, and NE cells. (b) A
  $20\times 20$ gridworld. A natural partition of state space using
  the walls gives rise to $K_0, K_1,K_2,K_3,K_4$ subsets of
  states. States in $K_0$ are enclosed using the squares. }
\end{figure}

Figure~\ref{fig:singlestep} shows a fraction of a gridworld. A robot
moves in this gridworld with uncertainty in different terrains
(`grass', `sand', `gravel' and `pavement'). In each terrain and for
robot's different action (heading north (`N'), south (`S'), west (`W')
and east (`E')), the probability of arriving at the correct cell is
$0.9$ for pavement, $0.85$ for grass, $0.8$ for gravel and $0.75$ for
sand. With a relatively small probability, the robot will arrive at
the cell adjacent to the intended one.
%  For example, with action `N', the
% intended cell is the one to the north (`N'), whose the adjacent ones
% are the northeast (`NE') and northwest cells (`NW').
Figure~\ref{fig:decompgw} displays a $20\times 20 $ gridworld. The
grey area and the boundary are walls. If the robot runs into the wall,
it will be bounce back to its original cell. The walls give rise to a
natural partition of the state space, as demonstrated in this
figure. % For example, the set of states in $K_4$ includes all
% the states enclosing by the walls, minus those in $K_0$. 
If no
explicit modular structure in the system can be found, one can compute
a decomposition using the method in section~\ref{sec:decompmethod}.
% When no wall present, a good decomposition can be found
% using Lemma~\ref{lma:decomp}.
In the following example, the wall pattern is the same as in the
$20\times 20 $ gridworld.

\subsection{Discounted-reward case} % Unless otherwise specified, all the examples were exercised with
% $\epsilon^{abs}= 10^{−5}$, $\epsilon^{rel}= 10^{-4}$. 

% \paragraph*{Reward function and the discounting factor}
We select a subset $W$ of cells as ``restricted area'' and a subset
$G$ of cells as ``targets''. The reward function is given: For
$s\in S$, $S\notin G \cup W $, $R(s,a)= -1$ counts for the amount of
time the robot takes action $a$.  For $s\in W$, for all $a\in A(s)$,
$R(s,a)=-1000$. For $s\in G$, $R(s,a )=100$ for all $a\in A(s)$.
Intuitively, this reward function will encourage the robot to reach
the target with as fewer expected number of steps as possible, while
avoiding running into a cell in the restricted area. We select
$\gamma= 0.9$.

\textbf{Case 1: } To show the convergence and correctness of the distributed
optimization algorithm, we first consider a $100\times 100$ gridworld
example that can be solved directly with a centralized
algorithm. Since at each cell there are four actions for the robot to
pick, the total number of variables is $4\times 8220$ for the
$100\times 100$ gridworld (the wall cells are excluded from the set of
states).  In this gridworld, there is only $1$ target cell. The
restricted area include $50$ cells.  The resulting \ac{lp} problem
\eqref{eq:constraintdiscounted} can be solved using CVX, a package for
specifying and solving convex programs \cite{cvx}. The problem is
solved in $4.77$ seconds, and the optimal objective value under the
optimal policy given by CVX is $10$. 

Next, we solve the same problem by decomposing the state space
of the \ac{mdp} along the walls into $25$ regions, each of which is a
$20\times 20$ gridworld. This partition of state space yields $75$
states for each $K_i, i>0$ and $720 $ states for $K_0$. In which
follows, we select $\rho= 80, 100,200, 500, 1000$ to show the
convergence of the distributed optimization algorithm. Irrespective of
the choices for $\rho$, the average time for each iteration is about
$0.16$ sec.  The solution accuracy relative to CVX is summarized in
Table~\ref{tbl:summary}.  The `rel. error in objval' is the relative
error in objective value attained, treating the CVX solution as the
accurate one, and the infeasibility is the relative primal
infeasibility of the solution, measured by
$ \frac{\norm{Ax^\ast -b}_2 }{1+ \norm{b}_1}.$
Figure~\ref{fig:reobjval100} shows the convergence of the optimization
algorithm.

\begin{table*}[t]
\caption{Relative resolution accuracy for solving the $100\times 100$
  gridworld example with discounted reward (under $\epsilon^{abs}= 10^{-5}$, $\epsilon^{rel}= 10^{-4}$).}
\centering
\begin{tabular}{c| c c c c c c}
$\rho$  & 80 & 100 & 200  & 500 & 1000 \\
% time (sec) & & 1870 &1689  & 2131 & 1804\\
Iterations   & 12001 &11014 & 13868 & 11866 & 12733\\
objval  & 9.54 & 9.90&  9.89 & 9.96 & 9.96 \\
rel. error ($\%$)&4.6& 1.0& 1.1 & 0.38 & 0.38 \\ 
infeasibility   &  $2.7\times 10^{-3}$ & $2\times 10^{-3}$   & $0.885  \times 10^{-3}$
               & $0.45 \times 10^{-3}$  & $0.23 \times 10^{-3}$
\end{tabular}
\label{tbl:summary} \end{table*}
\begin{figure}[t]
\centering
\includegraphics[width=0.45\textwidth]{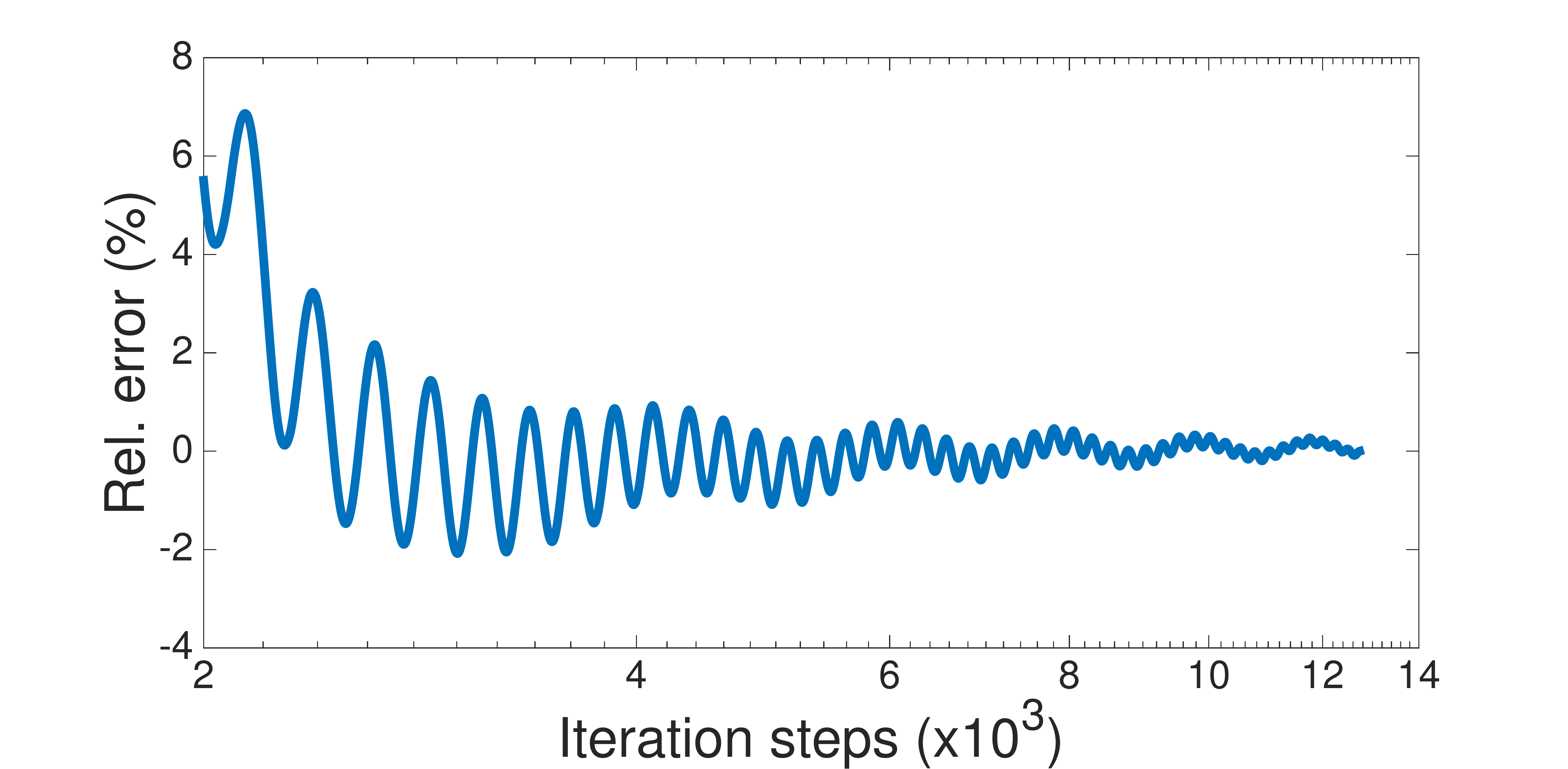}
\caption{Relative error in the objective value vesus iterations in
  $100\times 100$ gridworld with discounted reward, under
  $\rho=1000$. For clarity, we did not draw the relative error for the
  initial $2000$ steps, which are comparatively large.}
\label{fig:reobjval100}
\end{figure}
% where $x$ is the solution output by \ac{admm} with a corresponding
% $\rho$ and $x^\ast$ is the solution output by the CVX. The last colume
% is the value of the objective function under the optimal policies
% generated with different values of $\rho$ using \ac{admm}.

\textbf{Case 2:} For a $1000\times 100$ gridworld, the centralized
method in CVX fails to produce a solution for this large-scale
problem. Thus, we consider to solve it using the decomposition and
distributed synthesis method. In this example, we partition the
gridworld such that each region has $50\times 50$ cells, which results
in $40$ regions. There are $1160$ states in $K_0$ and about $2005$
states in each $K_i$, for $i = 1,\ldots, 40$. In this example, we
randomly select $40$ cells to be the targets and $40$ cells to be the
restricted areas.  By choosing $\rho=1000$, $\epsilon^{rel}=10^{-6}$,
$\epsilon^{abs} =10^{-6}$, the optimal policy is solved within $25342$
seconds and it takes about $2.6$ seconds for one iteration. The total
number of iterations is $9747$. Under the (approximately)-optimal
policy obtained by distributed optimization, the objective
value is $138.73$.  The relative primal infeasibility of the solution
is $0.29\times10^{-4}$.  Figure~\ref{fig:objval1000} shows the
convergence of distributed optimization algorithm.

\begin{figure}[t]
\begin{subfigure}[b]{0.45\textwidth}
\centering
\includegraphics[width=1\textwidth]{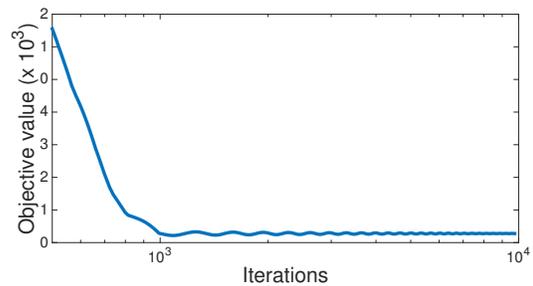}
\caption{}
\label{fig:objval1000}
\end{subfigure}
\begin{subfigure}[b]{0.45\textwidth}
\centering
\includegraphics[width=1\textwidth]{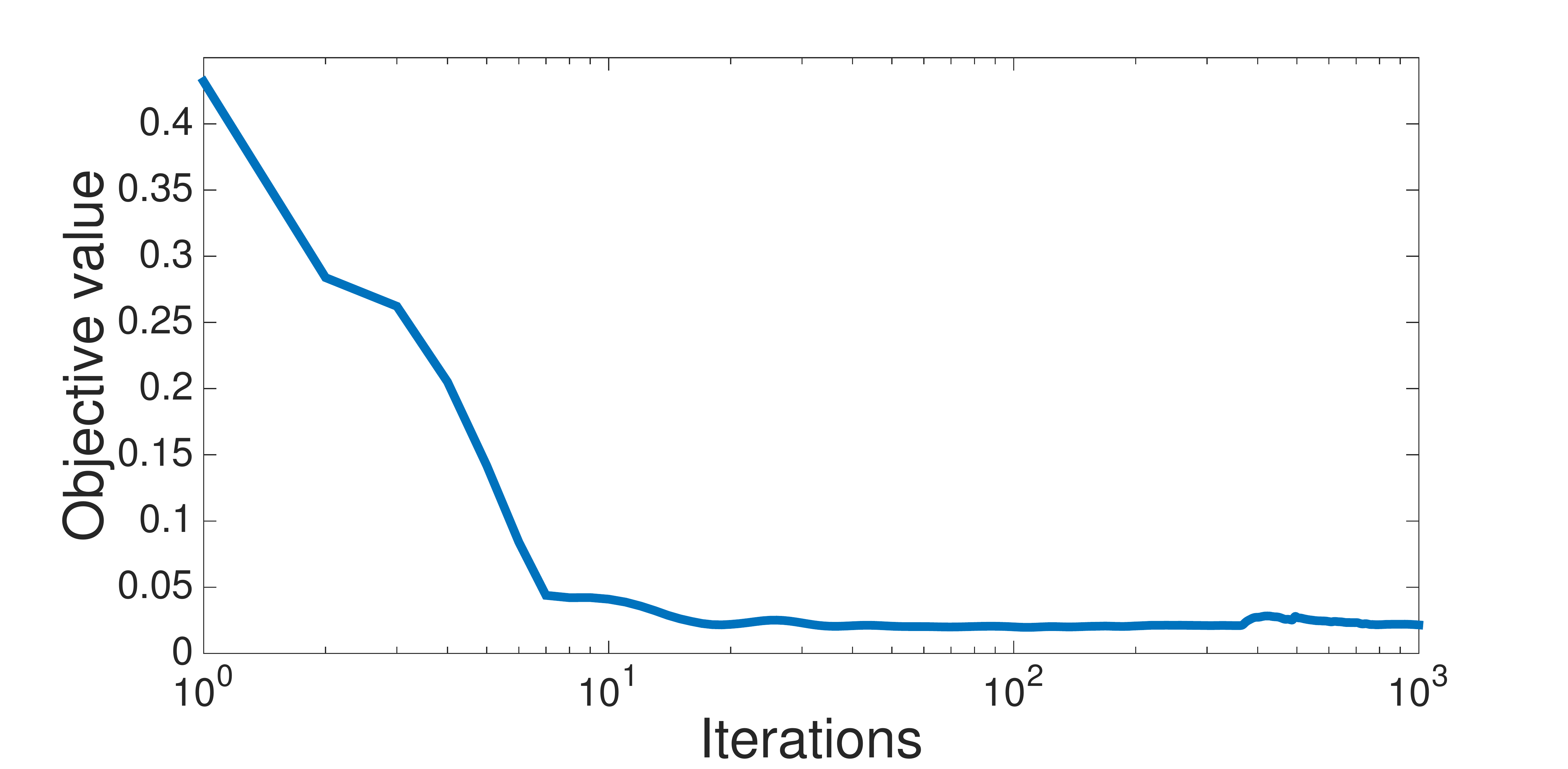}
\caption{}
\label{fig:reobjval50}
\end{subfigure}
\caption{ Objective value vesus iterations in (a) $1000 \times 100$
  gridworld (the initial $500$ steps are omitted). (b) Objective value
  vesus iterations in $50 \times 50$ gridworld with a B\"uchi
  objective. Here we only show the first $1000$ iterations as the
  objective value converges to the optimal one after $1000$ steps. }
\label{fig:relobjval1000a50}
\end{figure}

\subsection{Average-reward case with quantative temporal logic
  objectives}
We consider a $50 \times 50$ gridworld with no obstacles and $4$
critical regions labeled ``$R_1$'' , ``$R_2$'', ``$R_3$'' and
``$R_4$''. The system is given a temporal logic specification
$\varphi\coloneqq \square \lozenge (R_1\land \lozenge R_2) \land
\square \lozenge (R_3\land \lozenge R_4) $.
That is, the robot has to always eventually visit region $R_1$ and
then $R_2$, and also always eventually visit region $R_3$ and then
$R_4$. The number of states in the corresponding \ac{dba} is $14$
after trimming the unreachable states, due to the fact that the robot
cannot be at two cells simultaneously. The quantitative objective is
to maximize the frequency of visiting all four regions (an accepting
state in the \ac{dba}). The formulated \ac{mdp} is ergodic and
therefore our method for average-reward problems applies.

For an average-reward case, we need to satisfy the
constraint $\mathbf{1}^T x=1$ in \eqref{eq:transformaverageLP}. This
constraint leads to slow convergence and policies with large
infeasibility measures in distributed optimization. To handle this
issue, we approximate average reward with discounted reward
\cite{kakade2001optimizing}: For ergodic \ac{mdp}s,
the discounted accumulated reward, scaled by $1-\gamma$, is
approximately the average reward. Further, if $\frac{1}{1-\gamma}$ is
large compared to the mixing time \cite{puterman2009markov} of the
Markov chain, then the policy that optimizes the discounted
accumulated reward with the discounting factor $\gamma$ can achieve an
approximately optimal average reward. % The distributed optimization
% algorithm for the discounted-reward case with a large discounting
% factor $\gamma$ can efficiently compute an approximately optimal
% policy with respect to the average-reward criterion.

% \paragraph*{Optimizing average reward using discounted reward}
% Given $\rho=1000$, $\epsilon^{ref}=10^{-6}$ and
% $\epsilon^{abs}=10^{-6}$, , the ADMM algorithm terminates in $89201$
% iteration steps and the optimal average reward is $1$. The
% infeasibility measure of the obtained solution is
% $9.61\times 10^{-5}$.

Given $\rho=1000$, $\epsilon^{ref}=10^{-5}$ and
$\epsilon^{abs}=10^{-5}$, $\gamma=0.98$, the distributed synthesis
algorithm terminates in $14284$ iteration steps and the optimal
discounted reward is $0.9998$.  Scaling by $1-\gamma=0.02$, we obtain
the average reward $0.9998\times 0.02= 0.02$, which is the approximately
optimal value for this average reward under the obtained policy.  The
convergence result is shown in Figure~\ref{fig:reobjval50} and the
infeasibility measure of the obtained solution is $0.016$.

\section{Conclusion}
\label{sec:conclude}
For solving large Markov decision process models of stochastic
systems with temporal logic specifications, we developed a
decomposition algorithm and a distributed synthesis
method. This decomposition exploits the modularity in the system
structure and deals with sub-problems of smaller sizes. We employed
the block splitting algorithm in distributed optimization based on the
alternating direction method of multipliers to cope with the
difficulty of combining the solutions of sub-problems into a solution
to the original problem. Moreover, the formal decomposition-based
distributed control synthesis framework established in this paper
facilitates the application of other distributed and parallel
large-scale optimization algorithms \cite{Scutari2014} to further
improve the rate of convergence and the feasibility of solutions for
control synthesis in large \ac{mdp}s. In the future, we will develop
an interface to PRISM toolbox \cite{KNP11} with an implementation of
the proposed decomposition and distributed synthesis
algorithms.
\appendix

\section{Appendix} 
\label{App}

% \baselineskip{11pt}
% Applying the block splitting algorithm to \eqref{eq:admmform}
% gives: 
At the $k$-th iteration, for $ i,j=0,\ldots, N$,
\begin{align*}
\label{eq:admm}
\begin{split}
y_{i}^{k+1/2} & : = \prox_{f_i}(y_i^k  - \tilde y_i^k) = b_i, \\
x_j^{k+1/2} & : = \prox_{g_j}(x_j^k  - \tilde x_j^k)  \\
& =
\proj_{\mathbb{R}_+^{m_j}} (x_j^k - \tilde x_j^k - c_j/\rho), \\
(x_{ij}^{k+1/2}, y_{ij}^{k+1/2} ) & := \proj_{ij} (x_j^k - \tilde
x_{ij}^k, y_{ij}^k +\tilde y _{i}^k), \\
x_j^{k+1}& := \avg(x_j^{k+1/2}, \{x_{ij}^{k+1/2}
\}_{i=0}^{N}), \\
(y_i^{k+1}, \{y_{ij}^{k+1}\}_{j=0}^{N}) & := \exch(y_i^{k+1/2},
\{ y_{ij}^{k+1/2}\}_{j=0}^{N}),\\
& \text{ if } i=0, \\
(y_i^{k+1}, \{y_{i0}^{k+1}, y_{ii}^{k+1}\}) & := \exch(y_i^{k+1/2},
\{ y_{i0}^{k+1/2}, y_{ii}^{k+1/2}\}),\\
& \text{ if } i=1,\ldots,N,\\
\tilde x_j^{k+1} & := \tilde x_j^k+ x_j^{k+1/2} -x_j^{k+1},\\
\tilde y_i^{k+1} & := \tilde y_i^k+ y_i^{k+1/2} -y_i^{k+1},\\
\tilde x_{ij}^{k+1} & := \tilde x_{ij}^k+ x_{ij}^{k+1/2} -x_j^{k+1},
\end{split}
\end{align*}
where $\proj_{R^{m_i}_+}$ denotes the projection to the nonnegative
orthant, $\proj_{ij}$ denotes projection onto
$\{(x,y) \mid y=A_{ij}x\}$.  $\avg$ is the elementwise averaging
\footnote{Since for some $i,j$, $x_{ij}^{k+1/2} =0$, in the
  elementwise averaging, these $x_{ij}^{k+1/2}$ will not be
  included.}; and $\exch$ is the exchange operator, defined as below.
$ \exch(c, \{c_j\}_{j=1}^N) $ is given by
$y_{ij}:= c_j + (c -\sum_{j=1}^N c_j)/(N-1)$ and
$y_i : = c - (c-\sum_{j=1}^Nc_j)/N-1$. The variables can be
initialized to $0$ at $k=0$. Note that the computation in each
iteration can be parallelized.  The iteration terminates when the
stopping criterion for the block splitting algorithm is met (See
\cite{Parikh2013} for more details).  The solution can be obtained
$x^\ast = (x_0^{k+1/2},\ldots, x_N^{k+1/2})$.
% \end{comment}
\bibliographystyle{ieeetran}

\end{document}